\newcommand{\shorten}[1]{}
\newtheorem{proposition}{Proposition}
\newtheorem{definition}{Definition}
\newtheorem{lemma}{Lemma}
\newtheorem{example}{Example}
\newcommand{\signed}%
    {{\unskip\nobreak\hfill\penalty50
      \hskip2em\hbox{}\nobreak\hfil $\blacksquare$
      \parfillskip=0pt \finalhyphendemerits=0 \par}}
\newenvironment{proof}[1]
    {
    \bf{Proof:}\rm{\noindent{#1 }}\ignorespaces
    }
    {\signed\addvspace\medskipamount}
\begin{document}

\title{Balanced Locally Repairable Codes}

%
\author{\authorblockN{
Katina Kralevska, Danilo Gligoroski and Harald {\O}verby}
\authorblockA{Department of Telematics, Faculty of Information
Technology, Mathematics and Electrical Engineering, \\ NTNU, Norwegian University of Science and Technology, Trondheim, Norway,\\ Email:
\{katinak, danilog, haraldov\}@item.ntnu.no}

}


\maketitle

\begin{abstract}
We introduce a family of balanced locally repairable codes (BLRCs) $[n, k, d]$ for arbitrary values of $n$, $k$ and $d$. Similar to other locally repairable codes (LRCs), the presented codes are suitable for applications that require a low repair locality. The novelty that we introduce in our construction is that we relax the strict requirement the repair locality to be a fixed small number $l$, and we allow the repair locality to be either $l$ or $l+1$.
This gives us the flexibility to construct BLRCs for arbitrary values of $n$ and $k$ which partially solves the open problem of finding a general construction of LRCs. Additionally, the relaxed locality criteria gives us an opportunity to search for BLRCs that have a low repair locality even when double failures occur. We use metrics such as a storage overhead, an average repair bandwidth, a Mean Time To Data Loss (MTTDL) and an update complexity to compare the performance of BLRCs with existing LRCs.
\end{abstract}

{\bfseries {Keywords}}: Locally Repairable codes, Balanced, Storage overhead, Update complexity, Repair bandwidth, MTTDL

%
\IEEEpeerreviewmaketitle

\section{Introduction}

A conventional approach for achieving reliability in big data distributed storage systems is replication. In  particular, the reliability of 3-replication is an accepted industry standard for management of hardware failures and recovery. That is an apparent situation in systems such as Hadoop HDFS \cite{borthakur2007hadoop}, OpenStack SWIFT \cite{arnold2014openstack} or Microsoft Azure \cite{calder2011windows}. However, the accelerated and relentless data growth has made erasure coding a valuable alternative to 3-replication since erasure coding provides the same reliability as 3-replication, but with significant less storage overhead. Recently, there have been several proposals and experimental beta implementations of different types of erasure codes for huge distributed storage systems \cite{Fan:2009:DRD:1713072.1713075,plank2008jerasure,DBLP:journals/corr/KralevskaGJO16}.

Besides the reliability and the storage overhead, another important feature in distributed storage systems is the efficiency of the repair of a failed node. The efficiency is measured with two metrics: the repair bandwidth and the repair locality. The repair bandwidth is the amount of transferred data during a node repair, while the repair locality is the number of nodes contacted during the node repair process. Two types of erasure codes that address the repair efficiency have emerged: Regenerating codes \cite{5550492} and Locally Repairable Codes (LRCs) \cite{journals/tit/GopalanHSY12,conf/infocom/OggierD11,6195703}.

Regenerating codes \cite{5550492} aim to minimize the repair bandwidth, while LRCs seek to minimize the repair locality.
The main idea behind regenerating codes is using a sub-packetization.
Each block is divided into sub-packets and a recovery is performed by transferring sub-packets from all $n-1$ non-failed nodes that results in high I/O.
A proposal for reducing the I/O is given in \cite{conf/fast/KhanBPPH12}.
On the other hand, LRCs address the issue of accessing less nodes, but the amount of transferred data is bigger compared to regenerating codes.
However, communicating less nodes is beneficial for storage applications that require low I/O.

An $[n, k, d]_q$ MDS code $C$ has to transfer $k$ symbols to recover one lost symbol.
LRCs were independently introduced in \cite{journals/tit/GopalanHSY12,conf/infocom/OggierD11,6195703}.
The code $C$ has a locality $l$ if the $i-$th code symbol $c_i$, $1\leq i \leq n$, can be recovered by accessing $l$ symbols where $l < k$.
It was proved in \cite{journals/tit/GopalanHSY12} that the minimum distance of an $[n, k, d]_q$ code with a locality $l$ is
\begin{equation}
d \leq n-k+2-\Bigl\lceil \frac{k}{l} \Bigr\rceil.
\label{distance}
\end{equation}
Huang et al. showed the existence of pyramid codes that achieve this distance when the field size is big enough \cite{4276609}.
\newline
Two practical LRCs have been implemented in Windows Azure Storage \cite{conf/usenix/HuangSXOCG0Y12} and HDFS-Xorbas by Facebook \cite{journals/pvldb/SathiamoorthyAPDVCB13}.
Both implementations reduce the repair bandwidth and the I/O for reconstructing a single data block by introducing a fixed number of $l$ local and $r$ global parity blocks.
Any single data block can be recovered from $\frac{k}{l}$ blocks within its local group.
However, reconstruction of any global parity block (in Windows Azure) or double blocks failures is performed in the same way as with Reed-Solomon (RS) codes, i.e., $k$ blocks need to be transferred.

Since node failures in storage systems are often correlated \cite{conf/osdi/FordLPSTBGQ10}, there is a need for other erasure codes than LRCs for recovery from multiple failures.
For instance, Shingled erasure codes (SHEC) have a low average repair bandwidth when multiple failures occur, but they are not so efficient in terms of the storage overhead and the reliability \cite{conf/hotdep/MiyamaeNS14}.


The locality also has an impact on the update complexity \cite{5706931}.
This is particularly important for hot data, i.e., frequently accessed data.
For instance, an $[16, 10]$ LRC where the locality is 5, writing a data block takes 6 write operations (1 write to itself, 1 write to the local parity and 4 writes to the global parities).

Thus, having a general construction of LRCs that are simultaneously optimal in terms of storage overhead, reliability, locality and update complexity for a single failure and double failures is an important problem that is addressed in this work.



\subsection{Our Contribution}
We define a new family of balanced locally repairable codes (BLRCs).
One of their main characteristics is that every systematic block has an equal (balanced) influence to the parity blocks.
That is to say, each systematic block affects exactly $w$ parity blocks.
Additionally, we pay attention on the repair locality.
In our construction we use a similar (but yet different) approach to the approaches introduced by Luby et al. for the construction of irregular LDPC codes \cite{luby2001improved} and Garcia-Frias and Zhong for the construction of regular and irregular LDGM codes \cite{garcia2003approaching}. Namely, instead of the strict requirement the repair locality to be a fixed small number $l$, it may be either $l$ or $l+1$.
This partially solves the open problem given by Tamo et al. about a general construction of LRCs \cite{6620540}, because we construct LRCs for arbitrary values of $n$ and $k$, but the locality is not strictly equal to $l$.
Moreover, the relaxed locality criteria gives us an opportunity to search for BLRCs that have a low repair locality even when double failures occur.
We use four metrics to examine the performance of BLRCs:
\begin{itemize}
  \item Storage overhead (a ratio of the parity to the data blocks $\frac{r}{k}$);
  \item Average repair bandwidth (a ratio of the repair bandwidth to repair both data and parity blocks to the total stored data (sum of the data and the parity blocks));
  \item MTTDL (Mean Time To Data Loss - an estimate of the expected time that it would take a given storage system to exhibit enough failures such that at least one block of data cannot be retrieved or reconstructed);
  \item Update complexity (a maximum number of elements that must be updated when any single element is changed).
\end{itemize}

In summary, several goals are achieved simultaneously with this work: 1) low storage overhead; 2) low average repair bandwidth for single and double failures; 3) high reliability; and 4) improved update performance.

The paper is organized as follows.
In Section \ref{Preliminaries}, we introduce the terminology and the definition of balanced locally repairable codes.
In Section \ref{Example}, we give code examples and examine their performance by using the predefined metrics.
We also compare the properties of our codes to 3-replication, RS and other LRCs.
A reliability analysis is presented in Section \ref{reliability}.
Conclusions are summarized in Section \ref{Conclusions}.

\section{Definition of Balanced Locally Repairable codes} \label{Preliminaries}
We use the following notations throughout the rest of the paper.
A file of size $M$ is divided into $k$ equally sized blocks and encoded in $GF(q)$ with an $[n, k, d]_q$ code into $n$ coded blocks. 
An $[n, k, d]_q$ code is called maximum distance separable (MDS) if $d = n - k + 1$.
An $[n, k, d]_q$ MDS code reconstructs a failed block from any $k$ out of the $n$ blocks.
We denote the number of parity blocks with $r = n - k$.


\begin{definition}\label{def:DeErasure}
Let $C$ be an $[n, k, d]_q$ code over $GF(q)$ with a generator matrix $G$:
\begin{equation}
G=
\begin{bmatrix}
I_k |
P
\end{bmatrix},
\end{equation}
where $I_k$ is an identity matrix of order $k$ and the $k \times (n-k)$ matrix $P$ specifies how the parity is defined for the given $[n, k, d]_q$ linear code.
We call $C$ a \emph{Balanced Locally Repairable Code (BLRC)}, if the Hamming weight of every row in the matrix $P$ is $w$ where $w<k$, the Hamming weight of every column is $l$ or $l+1$ and for every submatrix $P'$ of $P$ consisting of $v$ rows, $1\leq v \leq w$, from $P$ it holds that $Rank(P')=v$.
\label{Generator}
\end{definition}
The field size should be big enough so that the condition for the rank in Definition \ref{Generator} is fullfiled.
\begin{example}
	Let us consider the following $[13,8,3]$ code with a generator matrix:
	$$
      G=\left[\small
      \begin{array}{c@{\hspace{0.5em}}c@{\hspace{0.5em}}c@{\hspace{0.5em}}c@{\hspace{0.5em}}c@{\hspace{0.5em}}c@{\hspace{0.5em}}c@{\hspace{0.5em}}c@{\hspace{0.5em}}c@{\hspace{0.5em}}c@{\hspace{0.5em}}c@{\hspace{0.5em}}c@{\hspace{0.5em}}c}
      1 & 0 & 0 & 0 & 0 & 0 & 0 & 0 & 0       & 0        & c_{1,11} & c_{1,12} & 0 \\
      0 & 1 & 0 & 0 & 0 & 0 & 0 & 0 & c_{2,9} & 0        & 0        & c_{2,12} & 0 \\
      0 & 0 & 1 & 0 & 0 & 0 & 0 & 0 & c_{3,9} & 0        & c_{3,11} &   0        & 0 \\
      0 & 0 & 0 & 1 & 0 & 0 & 0 & 0 & 0       & 0        & c_{4,11} &  0        & c_{4,13} \\
      0 & 0 & 0 & 0 & 1 & 0 & 0 & 0 & 0       & 0        & 0        & c_{5,12} & c_{5,13} \\
      0 & 0 & 0 & 0 & 0 & 1 & 0 & 0 & 0       & c_{6,10} & 0        &  0        & c_{6,13} \\
      0 & 0 & 0 & 0 & 0 & 0 & 1 & 0 & 0       & c_{7,10} & c_{7,11} &  0        & 0 \\
      0 & 0 & 0 & 0 & 0 & 0 & 0 & 1 & c_{8,9} & c_{8,10} & 0        &  0        & 0 \\
      \end{array}
      \right],
	$$
	where $c_{i,j}$ are some nonzero elements from $GF(q)$. Note that the Hamming weight of every row in $P$ is $w=2 < k$, while the Hamming weight of every column in $P$ is either $l=3$ or $l+1 = 4$. Finally, since any two rows in $P$ are linearly independent, the rank condition from Definition \ref{def:DeErasure} is fulfilled.
	Thus, the code is a balanced locally repairable code.
\end{example}

From the erasure recovery point of view, we use the minimum distance of the code as a metric for its fault tolerance. We have the following Lemma:
\begin{lemma} \label{MinDistanceDeErasure} If $[n, k, d]_q$ is a balanced locally repairable code defined in a finite field $GF(q)$, then
\begin{equation}
d = w+1.
\end{equation}
\end{lemma}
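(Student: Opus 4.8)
The claim is that a BLRC has minimum distance exactly $d = w+1$, where $w$ is the (common) Hamming weight of every row of $P$. Since the code is linear, the minimum distance equals the minimum Hamming weight of a nonzero codeword. A codeword has the form $\mathbf{c} = \mathbf{x}G = (\mathbf{x}, \mathbf{x}P)$ for some message vector $\mathbf{x} \in GF(q)^k$, so the weight of $\mathbf{c}$ is $\mathrm{wt}(\mathbf{x}) + \mathrm{wt}(\mathbf{x}P)$. The plan is to prove the two inequalities $d \le w+1$ and $d \ge w+1$ separately.

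\emph{Upper bound $d \le w+1$.} Take $\mathbf{x} = \mathbf{e}_i$, the $i$-th standard basis vector. Then $\mathbf{x}P$ is the $i$-th row of $P$, which by definition has Hamming weight exactly $w$. Hence the corresponding codeword has weight $1 + w$, so $d \le w+1$.

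\emph{Lower bound $d \ge w+1$.} I must show every nonzero codeword has weight at least $w+1$. Let $\mathbf{x}$ be nonzero with support size $v = \mathrm{wt}(\mathbf{x})$, say $1 \le v \le k$. If $v \ge w+1$ we are already done, since $\mathrm{wt}(\mathbf{c}) \ge \mathrm{wt}(\mathbf{x}) = v \ge w+1$. So assume $1 \le v \le w$. Then $\mathbf{x}P$ is a linear combination of $v$ rows of $P$, namely the rows indexed by the support of $\mathbf{x}$, with all coefficients nonzero; write $P'$ for the $v \times (n-k)$ submatrix formed by these rows, so that $\mathbf{x}P = \mathbf{x}' P'$ where $\mathbf{x}'$ is the restriction of $\mathbf{x}$ to its support and has no zero entries. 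The rank condition in Definition \ref{def:DeErasure} guarantees $\mathrm{Rank}(P') = v$, so the map $\mathbf{y} \mapsto \mathbf{y}P'$ is injective on $GF(q)^v$; since $\mathbf{x}' \ne \mathbf{0}$, we get $\mathbf{x}P = \mathbf{x}'P' \ne \mathbf{0}$, hence $\mathrm{wt}(\mathbf{x}P) \ge 1$. That only gives weight $\ge v+1$, which is not yet enough, so the argument must be sharpened: I need the stronger statement that any nonzero combination of $v \le w$ rows of $P$ has Hamming weight at least $w+1-v$. This will follow from the rank hypothesis applied to cleverly chosen subsets of columns — if $\mathbf{x}'P'$ had fewer than $w+1-v$ nonzero coordinates, then $P'$ restricted to at least $(n-k)-(w-v) = (n-k) - w + v$ columns would contain $\mathbf{x}'$ in its left kernel; one then argues that this forces a $v$-row submatrix of $P$ built from these rows and an appropriate complementary column set to be rank-deficient, contradicting the rank condition. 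Combining, $\mathrm{wt}(\mathbf{c}) = v + \mathrm{wt}(\mathbf{x}P) \ge v + (w+1-v) = w+1$.

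\emph{Main obstacle.} The easy direction and the trivial part of the lower bound ($v \ge w+1$) are immediate; the real work is the sharpened kernel/column-counting argument for $1 \le v \le w$, i.e., showing that $v$ rows of $P$ in a nontrivial combination cannot have too many common zero columns. The cleanest route is probably to recast it via a parity-check-style matrix: set up the dual relationship so that having a codeword of weight $\le w$ coming from a weight-$v$ message forces $w$ columns of some matrix with the stated rank property to be linearly dependent in a way the rank condition forbids. I expect to spend most of the proof making this counting precise and checking the edge cases $v = w$ (where the bound says $\mathrm{wt}(\mathbf{x}P) \ge 1$, matching the plain injectivity argument) and $v = 1$ (where it recovers the row-weight $w$).
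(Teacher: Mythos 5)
Your upper bound is correct and is essentially the paper's: the weight-$(w+1)$ codeword coming from a single message symbol and its $w$ parities is the same configuration the paper exhibits as an unrecoverable pattern of $w+1$ erasures. The problem is the lower bound. The entire content of $d\ge w+1$ is the claim you flag but do not prove, namely that any nonzero combination of $v\le w$ rows of $P$ has Hamming weight at least $w+1-v$, and this does \emph{not} follow from the rank condition of Definition~\ref{def:DeErasure}: that condition constrains only row-submatrices of $P$ and says nothing about how many coordinates of a nontrivial combination of independent rows may cancel. Concretely, with $w=3$, two rows of $P$ of the form $(1,1,1,0,\dots,0)$ and $(1,1,2,0,\dots,0)$ are linearly independent (and the remaining rows can be chosen so every row-rank condition and the column-weight condition hold), yet their difference has weight $1$, giving a codeword of weight $3<w+1$. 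So your sharpened statement is not a consequence of the stated hypotheses; what is actually needed (phrase it via the parity-check matrix $H=[-P^{T}\,|\,I_{n-k}]$) is that for every choice of $v\le w$ rows of $P$ and every deletion of $w-v$ of the parity columns, the remaining $v\times(n-k-w+v)$ matrix still has rank $v$ --- a condition on mixed row-and-column submatrices, which is exactly what your proposed column-counting argument would have to supply and which only holds under the informal ``coefficients generic, field large enough'' reading of the definition.

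For comparison, the paper argues in erasure-recovery terms: $d\le w+1$ from the pattern above, and $d>w$ by splitting an arbitrary set of $w$ erasures into $w_s$ systematic and $w_p$ parity losses and invoking the rank condition to recover the systematic blocks first, then the parities. Its hard case (lost parities entangled with lost systematic blocks) is the same hard case you face, and the paper treats it rather briefly; but whereas the paper at least gives an argument covering the split $w=w_s+w_p$, your write-up leaves precisely that step as a plan (``the argument must be sharpened\dots I expect to spend most of the proof making this counting precise''). As it stands the central inequality $d\ge w+1$ is asserted, not proved, and the sketched route cannot close without importing a hypothesis stronger than the row-rank condition you cite.
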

\begin{proof}
	The minimum distance $d$ of a code $C$ is equal to the number of failed blocks (erasures) after which the data cannot be recovered.
Note that if one systematic block and $w$ parity blocks that are linear combinations of the specific systematic block fail, then the systematic block is non-recoverable. This is true due to the fact that all $w+1$ parts that have (non-encoded or encoded) information about the systematic block have been lost. Thus, it follows that $d \le w+1$. 
Let us assume that $w_s$ systematic and $w_p$ parity blocks are lost where $w=w_s+w_p$. If we consider that the lost $w_p$ parity blocks are linear combinations from the $w_s$ systematic blocks that have been also lost, then the systematic blocks can be recovered only if for every submatrix $P'$ consisting of $w_s$ rows of $P$ it holds that $Rank(P')=w_s$. After recovering the systematic blocks, the lost parity blocks can be recovered.
Let us consider that $w_s=w$ systematic blocks have been lost. The lost systematic blocks can be recovered by selecting the corresponding rows that contain the specific $w$ systematic blocks in the matrix $P$ and producing a matrix $P'$. Since $Rank(P')=w$, the lost systematic blocks can be recovered. On the other hand, if $w_p=w$ parity blocks have been lost, then each of the parity blocks can be recovered from $l$ or $l+1$ systematic blocks. 
In any case $d>w$. Consequently, it follows that $d = w+1$.
\end{proof}

The locality of a systematic code $C$ is defined as the number of data blocks that each parity block is a function of.

\begin{lemma}
Let $[n, k, d]_q$ is a balanced locally repairable code defined in a finite field $GF(q)$. Then for its locality $l$, it holds: \begin{equation}
l = \Bigl\lfloor\frac{(d-1) \times k}{n-k}\Bigr\rfloor.\end{equation}
\end{lemma}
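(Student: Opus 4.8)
The plan is to count the total number of nonzero entries in the matrix $P$ in two different ways — once by rows and once by columns — and then extract the locality $l$ from that double count, substituting $d-1=w$ from Lemma~\ref{MinDistanceDeErasure}.

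First I would observe that by Definition~\ref{def:DeErasure}, every one of the $k$ rows of $P$ has Hamming weight exactly $w$, so the total number of nonzero entries in $P$ is $wk$. On the other hand, each of the $n-k=r$ columns of $P$ has Hamming weight either $l$ or $l+1$; say $a$ columns have weight $l$ and $b$ columns have weight $l+1$, with $a+b=r$. Counting entries by columns gives $al+b(l+1)=rl+b$, and equating the two counts yields $rl+b=wk$, i.e. $b=wk-rl$ with the constraint $0\le b\le r$. This forces $rl\le wk\le r(l+1)$, hence $l\le \frac{wk}{r}<l+1$ (using $b<r$ when not all columns have the larger weight; the boundary case $b=r$ can be folded in by the usual floor convention), so $l=\bigl\lfloor \frac{wk}{r}\bigr\rfloor=\bigl\lfloor\frac{wk}{n-k}\bigr\rfloor$.

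Then I would finish by invoking Lemma~\ref{MinDistanceDeErasure}, which gives $d=w+1$, equivalently $w=d-1$. Substituting into the expression above yields
\begin{equation}
l=\Bigl\lfloor\frac{wk}{n-k}\Bigr\rfloor=\Bigl\lfloor\frac{(d-1)k}{n-k}\Bigr\rfloor,
\end{equation}
which is exactly the claimed formula.

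The main obstacle — really the only subtle point — is handling the floor cleanly when $wk$ is an exact multiple of $r$: in that degenerate case all $r$ columns can be taken to have weight exactly $l=wk/r$ (so $b=0$), and one must check that the definition still permits a valid $P$ (columns may all have the same weight $l$, since "weight $l$ or $l+1$" is satisfied). One should also note implicitly that such a $P$ with the required rank property exists for $q$ large enough, which is already assumed in the paragraph following Definition~\ref{def:DeErasure}; the balanced/near-balanced column distribution is exactly what makes $wk=rl+b$ solvable in nonnegative integers with $b\le r$, so no additional constructive argument is needed beyond the counting identity.
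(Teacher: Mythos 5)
Your proof is correct and takes essentially the same route as the paper: the paper also double-counts the $wk$ nonzero entries of $P$ (rows of weight $w$, columns of weight $l$ or $l+1$) and identifies $l$ as $\bigl\lfloor\frac{(d-1)k}{n-k}\bigr\rfloor$ using $w=d-1$ from the preceding lemma. Your write-up is simply more explicit about extracting the floor from the column-weight bookkeeping (the $a,b$ count and the boundary case), which the paper glosses over by appealing to the ``average'' column weight.
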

\begin{proof}
The parity part $P$ of the generator matrix $G$ is an $k \times (n-k)$ matrix. Since every row has $w$ nonzero elements, with $k$ such rows, the total number of nonzero elements in $P$ is $w \times k$.
It follows that the average number of nonzero elements in every column of $P$ is $l=\Bigl\lfloor\frac{(d-1) \times k}{n-k}\Bigr\rfloor$.
\end{proof}

The number of transferred blocks during a repair process and the update complexity for BLRCs are captured in the following propositions:

\begin{proposition}
When recovering one lost block (a systematic or parity block) in an $[n, k, d]_q$ balanced locally repairable code, the number of transferred blocks is $l$ or $l+1$.
\end{proposition}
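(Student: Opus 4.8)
The plan is to split the argument according to the type of the lost block --- a parity block or a systematic block --- and in each case to exhibit explicitly a set of $l$ or $l+1$ surviving blocks from which the lost block can be reconstructed.

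First I would handle a lost parity block. By construction this block equals the linear combination of systematic blocks prescribed by the corresponding column of the matrix $P$. Definition \ref{def:DeErasure} requires every column of $P$ to have Hamming weight $l$ or $l+1$, so the parity block is a function of exactly $l$ or $l+1$ systematic blocks; transferring those blocks and re-evaluating the combination recovers it, so the number of transferred blocks is $l$ or $l+1$.

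Next I would treat a lost systematic block $c_i$. Since every row of $P$ has (nonzero) Hamming weight $w$, the $i$-th row of $P$ has a nonzero entry $c_{i,j}$ in some column $j$, i.e. $c_i$ participates in the parity block $p_j$. I would then write the defining relation of $p_j$, namely $p_j=\sum_m c_{m,j}\,c_m$, and solve it for $c_i$; this is legitimate because $c_{i,j}$ is a nonzero element of $GF(q)$ and hence invertible. The resulting expression gives $c_i$ in terms of $p_j$ and the other systematic blocks whose index occurs in column $j$. That column has weight $l$ or $l+1$, so apart from $c_i$ it contains either $l-1$ or $l$ systematic blocks; together with $p_j$ this amounts to $l$ or $l+1$ transferred blocks.

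The step needing care is the systematic-block case: one must be sure that a parity block incident to $c_i$ exists and that $c_i$ can indeed be isolated from its relation. Both facts come straight from Definition \ref{def:DeErasure} --- the row-weight condition forces at least one incident parity block, and the requirement that the entries of $P$ counted by the weights be nonzero elements of $GF(q)$ makes the coefficient of $c_i$ in that relation invertible. Combining the two cases gives the claimed count of $l$ or $l+1$ transferred blocks.
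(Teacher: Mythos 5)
The paper never actually prints a proof of this proposition --- it explicitly defers the proof (together with a construction algorithm tied to Lemma 2) to an extended version --- so there is no in-paper argument to compare yours against step by step. Taken on its own, your argument is correct and complete for the claim as stated: for a lost parity block, the corresponding column of $P$ has Hamming weight $l$ or $l+1$ by Definition \ref{def:DeErasure}, so re-evaluating the parity needs exactly that many systematic blocks, all of which survive; for a lost systematic block $c_i$, any column $j$ with $c_{i,j}\neq 0$ (one exists since the row weight $w$ is positive) lets you solve $p_j=\sum_m c_{m,j}c_m$ for $c_i$ because $c_{i,j}$ is invertible in $GF(q)$, and the transfer count is the parity block plus the other $l-1$ or $l$ systematic blocks in that column, i.e.\ $l$ or $l+1$. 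The only caveat, a cosmetic one, is that Definition \ref{def:DeErasure} does not explicitly require $w\geq 1$, which your systematic-block case needs; this is implicit for any code with $d=w+1\geq 2$, so it is not a genuine gap. Note the difference in scope from what the authors apparently intended: their omitted proof is coupled to an algorithm for constructing BLRCs (an existence question), whereas your proof establishes the locality property for every code satisfying the definition, which is precisely what the proposition asserts and is arguably the cleaner way to justify it.
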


The proof for Proposition 1 in connection with Lemma 2 includes a detailed algorithm how to construct BLRCs. We do not include it in this short paper due to space limitations, but we will include it in an extended version.

\begin{proposition}
The number of writes per update of an $[n, k, d]_q$ balanced locally repairable code is $w+1$.
\end{proposition}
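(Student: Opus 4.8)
The plan is to argue directly from Definition \ref{def:DeErasure} and the systematic form of $G$. First I would fix an arbitrary information symbol, say the $i$-th one, and determine which stored symbols change when its value changes. Since $G = [\,I_k \mid P\,]$, a codeword has the form $(x_1,\dots,x_k,\,y_1,\dots,y_{n-k})$ with $y_j = \sum_{t=1}^{k} P_{t,j}\,x_t$. Replacing $x_i$ by a different value $x_i'$ necessarily changes the stored copy of $x_i$ (one write), and it changes parity symbol $y_j$ exactly when the coefficient $P_{i,j}$ is nonzero; since every entry in the support of $P$ is a nonzero element of $GF(q)$, $y_j$ genuinely takes a new value for each $j$ in the support of row $i$ and is left unchanged for all other $j$.

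Next I would invoke the balanced row-weight condition from Definition \ref{def:DeErasure}: every row of $P$ has Hamming weight exactly $w$, so exactly $w$ parity symbols are functions of $x_i$. Hence updating $x_i$ forces rewriting the information symbol together with those $w$ parity symbols, i.e. $w+1$ stored symbols in total; and this is the exact count rather than a mere upper bound, since each of the $w$ affected parities provably stores a different value after the change. Because the row weight is uniformly $w$, the count is the same for every choice of $i$, so the maximum over all single-symbol updates is $w+1$. To finish, I would remark that altering a single stored parity symbol in isolation affects only that symbol, which does not exceed $w+1$, so the update complexity is attained on information-symbol updates and equals $w+1$.

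I do not expect a genuine obstacle here; the only point requiring a little care is the exactness (necessity) part — that one cannot update fewer than $w+1$ symbols. This follows because the parity symbols are stored quantities determined by the information symbols through nonzero linear coefficients, so any change to $x_i$ changes the value that a correct encoding must store at each of the $w$ parity positions in row $i$, forcing those writes in addition to the write of $x_i$ itself. Thus $w+1$ is both necessary and sufficient.
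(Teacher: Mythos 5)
Your argument is correct: since $G=[\,I_k\mid P\,]$ is systematic and every row of $P$ has Hamming weight exactly $w$ with nonzero entries from $GF(q)$, changing one data symbol forces a rewrite of itself plus exactly the $w$ parities in its row's support, giving $w+1$ writes for every data symbol. The paper states this proposition without an explicit proof, and your direct counting is precisely the reasoning it uses implicitly (compare the introduction's $[16,10]$ LRC example of $1+1+4$ writes and the update-complexity column of Table~\ref{compare}, where the BLRC entries equal $w+1$), so there is no substantive difference in approach.
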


Since the node failures in storage systems are often correlated \cite{conf/osdi/FordLPSTBGQ10}, we next give an algorithm for finding BLRCs that have a low repair locality even when two blocks have failed.
Algorithm 1 uses a stochastic hill-climbing search in a similar manner as in \cite{4658694,6875415}.


\begin{algorithm}
\caption{A general Stochastic Hill-Climbing search for finding a locally repairable code for given $n$, $k$ and $d$
\newline
\textbf{Input}: $n$, $k$ and $d$;
\newline
\textbf{Output}: A Balanced Locally Repairable Code.}
\label{Valid}
\begin{algorithmic}[1]
	\State Find a random $[n, k, d]$ linear code as in Definition \ref{Generator} where $w=d-1$ is the Hamming weight of every row of the matrix $P$;
	\State Repeatedly improve the solution by searching for codes with low average locality when two blocks failures have to be recovered, until no more improvements are necessary/possible.
\end{algorithmic}
\end{algorithm}
\begin{figure*}
\begin{minipage}[b]{0.5\linewidth}
\centering
\includegraphics[scale=0.35,height=3.7cm]{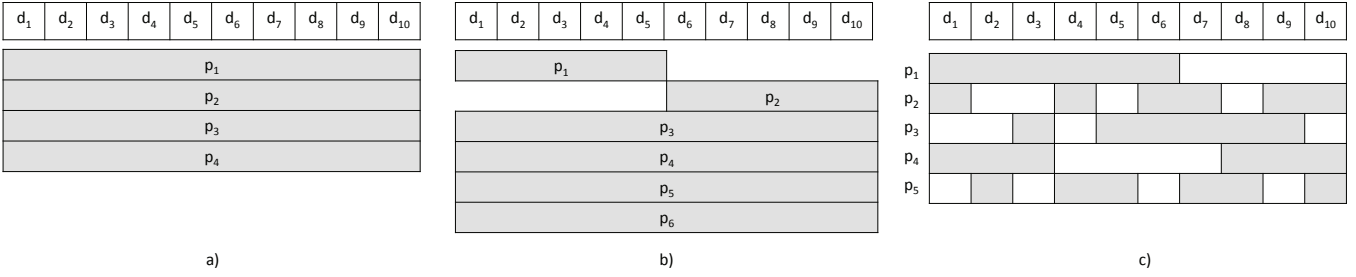}
\end{minipage}
\caption{(a) An [14, 10] RS code where $l=10$; (b) An [16, 10] LRC where $l=5$ for the local parities $p_1$ and $p_2$ and $l=10$ for the global parities $p_3, p_4, p_5$ and $p_6$; (c) An [15, 10] BLRC where $l=6$ and $w=3$}
\label{local}
\end{figure*}

The construction of our codes has some similarities with the construction of several classes of LDPC codes reported in the literature. In particular, several families of LDPC codes that are based on Finite Geometries are defined in \cite{6620540}. In that work, the restrictions that are inferred by the properties of Finite Geometries restrict the possible choices of different $n$ and $k$. Variable irregular LDPC codes are constructed by puncturing the codes or by splitting the columns and rows of the parity-check matrix $H$.
\newline
We have been inspired by two other works that are also from the area of LDPC codes. Namely, Luby et al. in 2001 introduced the principle of allowing irregularities for variable nodes in a LDPC construction \cite{luby2001improved}. They allowed those irregularities to have degree 2, 3, 4 or even 20, while in our construction the degree of locality is either $l$ or $l+1$. On the other hand, in 2003 Garcia-Frias and Zhong proposed regular and irregular LDGM codes in \cite{garcia2003approaching}. For the regular LDGM codes, the parity matrix $P$ has always a fixed row weight $X$ and fixed column weight $Y$ which is equivalent to the LRC case where the row weight is fixed at $w$ and the column weight is fixed at $l$. For the irregular LDGM codes the parity matrix $P$ has an average row weight $X$ and an average column weight $Y$, while in our construction the row weight is fixed to $w$ but the column weight can be either $l$ or $l+1$.

\section{Examples of Code Constructions} \label{Example}

In this Section we present several parity parts $P$ (not to be confused with a parity-check matrix $H$) of BLRCs for different code parameters.

The parity part $P_1$ of an $[15, 10]$ code for $l=6$ and $w=3$  is:

$$
      P_1=\left[\small
      \begin{array}{c@{\hspace{0.5em}}c@{\hspace{0.5em}}c@{\hspace{0.5em}}c@{\hspace{0.5em}}c@{\hspace{0.5em}}c@{\hspace{0.5em}}c@{\hspace{0.5em}}c@{\hspace{0.5em}}c@{\hspace{0.5em}}c@{\hspace{0.5em}}c@{\hspace{0.5em}}c@{\hspace{0.5em}}c}
 c_{1,11} & c_{1,12} & 0 & c_{1,14} & 0 \\
 c_{2,11} & 0 & 0 & c_{2,14} & c_{2,15} \\
 c_{3,11} & 0 & c_{3,13} & c_{3,14} & 0 \\
 c_{4,11} & c_{4,12} & 0 & 0 & c_{4,15} \\
 c_{5,11} & 0 & c_{5,13} & 0 & c_{5,15} \\
 c_{6,11} & c_{6,12} & c_{6,13} & 0 & 0 \\
 0 & c_{7,12} & c_{7,13} & 0 & c_{7,15} \\
 0 & 0 & c_{8,13} & c_{8,14} & c_{8,15} \\
0 & c_{9,12} & c_{9,13} & c_{9,14} & 0 \\
0 & c_{10,12} & 0 & c_{10,14} & c_{10,15} \\
\end{array}
      \right],
	$$ where the coefficients $c_{i,j}$, $i\leq 10$ and $11 \leq j\leq 15$, are elements from a finite field $GF(q)$. Since we do not show the $I_{k}$ matrix, the index $j$ for the non-zero coefficients is in the range between $k+1$ and $n$.
Note that the number of non-zero elements in $P_1$ per row is $w=3$ and per column is $l=6$. A transposed $P_1$ is graphically represented in Figure \ref{local}c.
The non-zero elements are represented with the shaded blocks in Figure \ref{local}c.
The average repair bandwidth for a single failure is 6 and for double failures is 9.
For comparative purposes we graphically represent the parity parts of an [14, 10] RS and an [16, 10] LRC in Figure \ref{local}a and \ref{local}b, respectively.
As we can see the RS code has the biggest locality $l=10$.
Consequently, a transfer of 10 blocks is required to repair any systematic or parity block when the RS code is used.
The [16, 10] LRC has locality equal to 5 for the local parity blocks and 10 for the global parity blocks.
Therefore, it requires a transfer of 5 blocks to repair a single failure of the systematic and the local parity blocks, while it takes 10 blocks to repair the global parities.
Hence, the average repair bandwidth for a single failure with the [16, 10] Azure LRC is $(5\times12+10\times4)/16=6.25$.
However, the [16, 10] Xorbas LRC reduces the number of transferred blocks for a repair of any single global parity block to 5 by introducing an implied parity block.
Thus, it has a lower average repair bandwidth compared to the Azure LRC implementation.
A comparison of the performance metrics for the $[15,10]$ code with parity part $P_1$ with 3-replication, the [14, 10] RS and the [16, 10] Xorbas LRC is presented in Table \ref{compare}, while additionally the Azure LRC is added in Figure \ref{repair}.
The way how we calculate the MTTDL is described in Section \ref{reliability}.

The next example of an [16, 10] code for $l=5$ and $w=3$ shows even a better performance when double failures occur.
The average repair bandwidth for a single failure is 5, while it is 7 for double failures.
This code tolerates up to any 3 failures and recovers the data successfully with 99.45\%, 96.02\% and 79.66\% from 4, 5 and 6 failures, respectively.
Its parity part is given as
$$
P_2=\left[\small
      \begin{array}{c@{\hspace{0.5em}}c@{\hspace{0.5em}}c@{\hspace{0.5em}}c@{\hspace{0.5em}}c@{\hspace{0.5em}}c@{\hspace{0.5em}}c@{\hspace{0.5em}}c@{\hspace{0.5em}}c@{\hspace{0.5em}}c@{\hspace{0.5em}}c@{\hspace{0.5em}}c@{\hspace{0.5em}}c}
c_{1,11} & 0 & 0 & 0 & c_{1,15} & c_{1,16} \\
0 & c_{2,12} &  c_{2,13} & 0 & 0 &  c_{2,16} \\
0 & c_{3,12} & 0 & 0 & c_{3,15} & c_{3,16} \\
0 & 0 & c_{4,13} & c_{4,14} & c_{4,15} & 0 \\
c_{5,11} & 0 & 0 & c_{5,14} & 0 & c_{5,16} \\
0 & c_{6,12} & c_{6,13} & c_{6,14} & 0 & 0 \\
c_{7,11} & c_{7,12} & 0 & c_{7,14} & 0 & 0 \\
c_{8,11} & 0 & 0 & c_{8,14} & c_{8,15} & 0 \\
0 & 0 & c_{9,13} & 0 & c_{9,15} & c_{9,16} \\
c_{10,11} & c_{10,12} & c_{10,13} & 0 & 0 & 0 \\
\end{array}
      \right].
$$
We depict the average repair bandwidth for different number of block failures for the RS, Xorbas LRC, Azure LRC and BLRCs in Figure \ref{repair}, under the condition of almost equal MTTDL.
BLRCs achieve an average repair bandwidth that is less than or equal to the average repair bandwidth with the other codes in case of a single block failure, while it is always less than the average repair bandwidth achieved with the other codes in case of double block failures.
Note that the storage overhead is less with the BLRC compared to the Xorbas LRC when they achieve the same average repair bandwidth for a single failure.

\begin{figure}
\begin{minipage}[b]{0.5\linewidth}
\centering
\includegraphics[width=8.5cm,height=6cm]{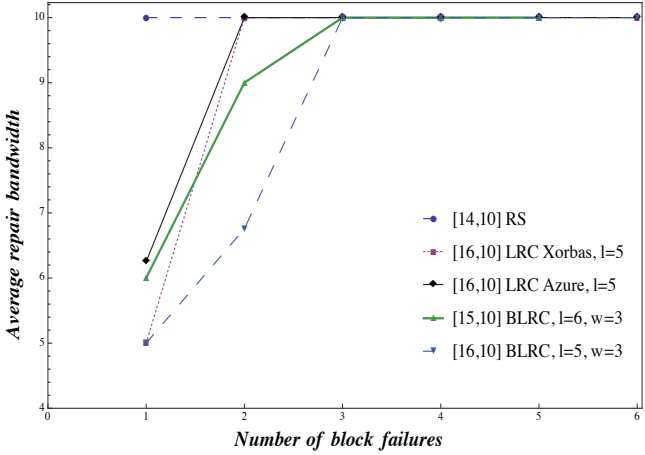}
\end{minipage}
\caption{Average repair bandwidth for different number of block failures}
\label{repair}
\end{figure}

The next example shows how the repair bandwidth can be reduced even more, but then the fault tolerance is worse. This has a direct impact on the reliability, i.e., MTTDL.
The parity part of an $[16, 10]$ code for $l=3$ or $l=4$ and $w=2$ is
$$
P_3=\left[\small
      \begin{array}{c@{\hspace{0.5em}}c@{\hspace{0.5em}}c@{\hspace{0.5em}}c@{\hspace{0.5em}}c@{\hspace{0.5em}}c@{\hspace{0.5em}}c@{\hspace{0.5em}}c@{\hspace{0.5em}}c@{\hspace{0.5em}}c@{\hspace{0.5em}}c@{\hspace{0.5em}}c@{\hspace{0.5em}}c}
c_{1,11} & 0 & 0 & 0 & 0 & c_{1,16} \\
0 & 0 & 0 & c_{2,14} & c_{2,15} & 0 \\
0 & 0 & 0 & c_{3,14} & 0 & c_{3,16} \\
0 & 0 & 0 & 0 & c_{4,15} & c_{4,16} \\
c_{5,11} & 0 & 0 & c_{5,14} & 0 & 0 \\
0 & c_{6,12} & 0 & 0 & c_{6,15} & 0 \\
c_{7,11} & c_{7,12} & 0 & 0 & 0 & 0 \\
c_{8,11} & 0 & c_{8,13} & 0 & 0 & 0 \\
0 & 0 & c_{9,13} & 0 & c_{9,15} & 0 \\
0 & c_{10,12} & c_{10,13} & 0 & 0 & 0 \\
\end{array}
      \right].
$$
\vspace{0.1cm}
When applying this code the average repair bandwidth for a single block failure is reduced to 3.33, while for double block failures to 5.22.
On the other hand, the fault tolerance is worse compared to the [16, 10] code for $l=5$ and $w=3$.
Thus, the MTTDL is reduced from $5.7378\times 10^{14}$ days to $7.2338\times 10^{8}$ days for an [16, 10] code when $l=5$, $w=3$ and $l=3.33$, $w=2$, respectively.

An overview of the performance metrics for the codes presented in this Section is given in Table \ref{compare}.

\section{Reliability Analysis} \label{reliability}
\setlength{\tabcolsep}{0.5em} 
{\renewcommand{\arraystretch}{1.4}
\begin{table*}[tb]
\caption{Comparison summary of performance metrics for 3-replication, RS, Xorbas LRC and BLRCs}
\vspace{0.25cm}
\centering
\begin{tabular}{*{15}{|c|}}
\hline
Scheme & Storage overhead & \makecell{Avr. repair bandwidth \\ (single failure)} & \makecell{Avr. repair bandwidth \\ (double failure)} & MTTDL (days) & \makecell{Update \\ complexity}\\
\hline
3-replication & 2x  & 1x & 1x & $2.3079\times 10^{10}$ & 3\\ 
\hline
[14, 10] RS code & 0.4x  & 10x & 10x & $3.3118\times 10^{13}$ & 5\\ 
\hline
[16, 10] Xorbas LRC, $l=5$ & 0.6x  & 5x & 10x & $1.2180\times 10^{15}$ & 6\\ 
\hline
[15, 10] BLRC, $l=6,w=3$ & 0.5x  & 6x & 9x & $3.3647\times 10^{14}$ & 4\\ 
\hline
[16, 10] BLRC, $l=5,w=3$ & 0.6x  & 5x & 7x & $5.7378\times 10^{14}$ & 4\\ 
\hline
[16, 10] BLRC, $l=3$ or $4, w=2$ & 0.6x  & 3.33x & 5.22x & $7.2338\times 10^{8}$ & 3\\ 
\hline
\end{tabular}
\label{compare}
\end{table*}
}

We perform a reliability analysis by calculating the MTTDL with a Markov model.
The authors in \cite{journals/pvldb/SathiamoorthyAPDVCB13} report values from the Facebook cluster and show that the $[16, 10]$ Xorbas LRC provides significantly longer MTTDL compared to the $[14, 10]$ RS and 3-replication.

In our analysis, we use the same parameters as in \cite{journals/pvldb/SathiamoorthyAPDVCB13} in order to compare the results.  
The total size of the cluster data is $C=30PB$ and this data is stored in $N=3000$ nodes.
The mean time to failure of a disk node is 4 years (=1/$\lambda$) and the block size is $B=256MB$.
The node failures are independent.
The bandwidth for cross-rack communication for repairs is limited to $\gamma=1Gbps$.
Under an $[15, 10]$ code, each stripe consists of 15 blocks where each block is placed in different racks to provide a higher fault tolerance.
Thus the total number of stripes in the system is $C/(n B)$ where $n=15$.
The MTTDL of a stripe is calculated by using the Markov model shown in Figure \ref{markov}.
Each state in the Markov model represents the number of available (non-failed) blocks (data and parity blocks).
The circles denote the states when the system is up and the squares denote the states when there is a data loss in the system, i.e., the system is down.

Let $\lambda$ denotes the failure rate of a single block.
The blocks are distributed in different nodes and the failure rate per node is $\lambda$.
When the state is $i$, i.e., there are $i$ available blocks in a stripe, the failure rate is $i\lambda$.
Consequently, the transition rate from State 15 to State 14 is $15 \lambda$.
Since BLRCs are not MDS codes, there are two possible transitions from State 12.
One of the transitions is to State 11 where there are 4 decodable failures and the other one is to State 11F which represents a state with 4 non-decodable failures.
The percentage of 4 decodable failures is $p_4 = 99.2674\%$.
Therefore the transition rate to State 11 is $12 \lambda p_4$ and to State 11F is $12 \lambda (1-p_4)$.
The same situation happens when transitioning from State 11 to State 10 and State 10F where $p_5=89.677\%$.
When 6 blocks are lost, i.e., only 9 blocks are available in State 9, the lost blocks from the stripe cannot be recovered.
That is why State 9 is shown as a down state.
The lost blocks from the stripe are also non-recoverable in the States 11F and 10F.

In the reverse direction $\rho_i$ denotes the repair rate.
The rate at which a block is repaired depends on the number of downloaded blocks (the locality), the block size and the bandwidth dedicated for repairs.
For instance, a repair of any single data or parity block requires downloading 6 blocks, i.e., $\rho_1=\frac{\gamma}{6 B}$.
Any two lost blocks are repaired by downloading 9 blocks, while a repair of more than 2 lost blocks requires a transfer of 10 blocks.
Thus, $\rho_2=\frac{\gamma}{9 B}$ and $\rho_3=\rho_4=\rho_5=\frac{\gamma}{10 B}$.
The MTTDL of the system is calculated as:
\begin{equation}
MTTDL=\frac{MTTDL_{stripe}}{C/(n B)}.
\label{MTTDL}
\end{equation}

The MTTDL values for 3-replication, the [14, 10] RS, the [16, 10] Xorbas LRC and few BLRCs are presented in Table \ref{compare}.
We observe that the fast repair and the high fault tolerance lead to a high reliability with the [15, 10] and [16, 10] BLRCs.

\begin{figure}
\begin{minipage}[b]{0.5\linewidth}
\centering
\includegraphics[width=3.4in]{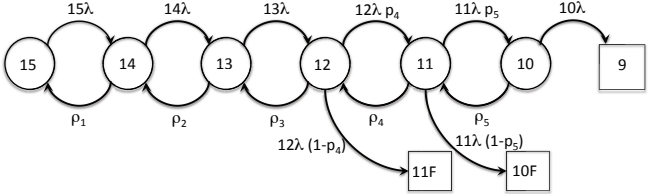}
\end{minipage}
\caption{Markov model for an [15, 10] code where circles represent the states when the data can be recovered and squares represent the states when the data is unrecoverable}
\label{markov}
\end{figure}

\section{Conclusions} \label{Conclusions}
We defined a new family of balanced locally repairable codes (BLRCs).
A novel property of the codes that we presented is that there is no strict requirement that the repair locality is a fixed small number $l$, and it may be either $l$ or $l+1$.
Advantageously, this provides the flexibility to construct BLRCs for arbitrary values of $n$ and $k$ which allows a general construction of LRCs.
The properties of the presented codes are:
low storage overhead, low average repair bandwidth for a single failure and double failures, high reliability and low update complexity.

\bibliographystyle{IEEEtran}
\bibliography{refer}

\begin{thebibliography}{10}
\providecommand{\url}[1]{#1}
\csname url@samestyle\endcsname
\providecommand{\newblock}{\relax}
\providecommand{\bibinfo}[2]{#2}
\providecommand{\BIBentrySTDinterwordspacing}{\spaceskip=0pt\relax}
\providecommand{\BIBentryALTinterwordstretchfactor}{4}
\providecommand{\BIBentryALTinterwordspacing}{\spaceskip=\fontdimen2\font plus
\BIBentryALTinterwordstretchfactor\fontdimen3\font minus
  \fontdimen4\font\relax}
\providecommand{\BIBforeignlanguage}[2]{{%
\expandafter\ifx\csname l@#1\endcsname\relax
\typeout{** WARNING: IEEEtran.bst: No hyphenation pattern has been}%
\typeout{** loaded for the language `#1'. Using the pattern for}%
\typeout{** the default language instead.}%
\else
\language=\csname l@#1\endcsname
\fi
#2}}
\providecommand{\BIBdecl}{\relax}
\BIBdecl

\bibitem{borthakur2007hadoop}
D.~Borthakur, ``The hadoop distributed file system: Architecture and design,''
  Hadoop Project Website, 2007.

\bibitem{arnold2014openstack}
J.~Arnold, ``Openstack swift: Using, administering, and developing for swift
  object storage,'' 2014.

\bibitem{calder2011windows}
B.~Calder, J.~Wang, A.~Ogus, N.~Nilakantan, A.~Skjolsvold, S.~McKelvie, Y.~Xu,
  S.~Srivastav, J.~Wu, H.~Simitci et~al., ``Windows azure storage: a highly
  available cloud storage service with strong consistency,'' in 23rd ACM
  Symposium on Operating Systems Principles, 2011, pp. 143--157.

\bibitem{Fan:2009:DRD:1713072.1713075}
B.~Fan, W.~Tantisiriroj, L.~Xiao, and G.~Gibson, ``Diskreduce: Raid for
  data-intensive scalable computing,'' in Proceedings of the 4th Annual
  Workshop on Petascale Data Storage, 2009, pp. 6--10.

\bibitem{plank2008jerasure}
J.~S. Plank, S.~Simmerman, and C.~D. Schuman, ``Jerasure: A library in c/c++
  facilitating erasure coding for storage applications-version 1.2,'' Tech.
  Rep., 2008.

\bibitem{DBLP:journals/corr/KralevskaGJO16}
\BIBentryALTinterwordspacing
K.~Kralevska, D.~Gligoroski, R.~E. Jensen, and H.~{\O}verby, ``Hashtag erasure
  codes: From theory to practice,'' CoRR, vol. abs/1609.02450, 2016. [Online].
  Available: \url{http://arxiv.org/abs/1609.02450}
\BIBentrySTDinterwordspacing

\bibitem{5550492}
A.~Dimakis, P.~Godfrey, Y.~Wu, M.~Wainwright, and K.~Ramchandran, ``Network
  coding for distributed storage systems,'' IEEE Transactions on Information
  Theory, vol.~56, no.~9, Sept 2010, pp. 4539--4551.

\bibitem{journals/tit/GopalanHSY12}
P.~Gopalan, C.~Huang, H.~Simitci, and S.~Yekhanin, ``On the locality of
  codeword symbols,'' IEEE Transactions on Information Theory, vol.~58, no.~11,
  2012, pp. 6925--6934.

\bibitem{conf/infocom/OggierD11}
F.~E. Oggier and A.~Datta, ``Self-repairing homomorphic codes for distributed
  storage systems,'' in INFOCOM, 2011, pp. 1215--1223.

\bibitem{6195703}
D.~Papailiopoulos, J.~Luo, A.~Dimakis, C.~Huang, and J.~Li, ``Simple
  regenerating codes: Network coding for cloud storage,'' in IEEE INFOCOM,
  2012, pp. 2801--2805.

\bibitem{conf/fast/KhanBPPH12}
O.~Khan, R.~C. Burns, J.~S. Plank, W.~Pierce, and C.~Huang, ``Rethinking
  erasure codes for cloud file systems: minimizing {I}/{O} for recovery and
  degraded reads,'' in Proceedings of the 10th {USENIX} conference on File and
  Storage Technologies, 2012.

\bibitem{4276609}
C.~Huang, M.~Chen, and J.~Li, ``Pyramid codes: Flexible schemes to trade space
  for access efficiency in reliable data storage systems,'' in IEEE
  International Symposium on Network Computing and Applications, July 2007, pp.
  79--86.

\bibitem{conf/usenix/HuangSXOCG0Y12}
C.~Huang, H.~Simitci, Y.~Xu, A.~Ogus, B.~Calder, P.~Gopalan, J.~Li, and
  S.~Yekhanin, ``Erasure coding in windows azure storage,'' in USENIX Annual
  Technical Conference, 2012, pp. 15--26.

\bibitem{journals/pvldb/SathiamoorthyAPDVCB13}
M.~Sathiamoorthy, M.~Asteris, D.~S. Papailiopoulos, A.~G. Dimakis, R.~Vadali,
  S.~Chen, and D.~Borthakur, ``{XOR}ing elephants: Novel erasure codes for big
  data,'' vol.~6, no.~5, 2013, pp. 325--336.

\bibitem{conf/osdi/FordLPSTBGQ10}
D.~F., F.~Labelle, F.~I. Popovici, M.~Stokely, V.~Truong, L.~Barroso,
  C.~Grimes, and S.~Quinlan, ``Availability in globally distributed storage
  systems,'' in 9th {USENIX} Symposium on Operating Systems Design and
  Implementation, 2010, pp. 61--74.

\bibitem{conf/hotdep/MiyamaeNS14}
T.~Miyamae, T.~Nakao, and K.~Shiozawa, ``Erasure code with shingled local
  parity groups for efficient recovery from multiple disk failures,'' in 10th
  Workshop on Hot Topics in System Dependability.\hskip 1em plus 0.5em minus
  0.4em\relax USENIX Association, 2014.

\bibitem{5706931}
N.~Anthapadmanabhan, E.~Soljanin, and S.~Vishwanath, ``Update-efficient codes
  for erasure correction,'' in 48th Annual Allerton Conference on
  Communication, Control, and Computing, Sept 2010, pp. 376--382.

\bibitem{luby2001improved}
M.~G. Luby, M.~Mitzenmacher, A.~Shokrollahi, D.~Spielman et~al., ``Improved
  low-density parity-check codes using irregular graphs,'' IEEE Transactions on
  Information Theory, vol.~47, no.~2, 2001, pp. 585--598.

\bibitem{garcia2003approaching}
J.~Garcia-Frias and W.~Zhong, ``Approaching shannon performance by iterative
  decoding of linear codes with low-density generator matrix,'' IEEE
  Communications Letters, vol.~7, no.~6, 2003, pp. 266--268.

\bibitem{6620540}
I.~Tamo, D.~S. Papailiopoulos, and A.~G. Dimakis, ``Optimal locally repairable
  codes and connections to matroid theory,'' in IEEE International Symposium on
  Information Theory, 2013, pp. 1814--1818.

\bibitem{4658694}
Y.~Wang, J.~S. Yedidia, and S.~C. Draper, ``Construction of high-girth qc-ldpc
  codes,'' in 5th International Symposium on Turbo Codes and Related Topics,
  Sept 2008, pp. 180--185.

\bibitem{6875415}
D.~Gligoroski and K.~Kralevska, ``Families of optimal binary non-mds erasure
  codes,'' in IEEE International Symposium on Information Theory, June 2014,
  pp. 3150--3154.

\end{thebibliography}

\end{document}